\newtheorem{theorem}{Theorem}
\newtheorem{proposition}[theorem]{Proposition}
\newcommand{\keywords}[1]{\textbf{Keywords:}\quad #1}
\title{{\bf The role of behavioural plasticity \\ in finite vs infinite populations}}
\author{
M. Kleshnina$^{1*}$,
K. Kaveh$^{2}$,
K. Chatterjee$^{1}$\\
\\
$^1$IST Austria,
$^2$Dartmouth College\\
\vspace{4mm}
$^*$ Corresponding author, email: maria.kleshnina@ist.ac.at (MK)
}
\begin{document}

\maketitle

\vspace{3mm}

\begin{abstract}
Evolutionary game theory has proven to be an elegant framework providing many fruitful insights in population dynamics and human behaviour. Here, we focus on the aspect of behavioural plasticity and its effect on the evolution of populations. We consider games with only two strategies in both well-mixed infinite and finite populations settings. We assume that individuals might exhibit behavioural plasticity referred to as incompetence of players. We study the effect of such heterogeneity on the outcome of local interactions and, ultimately, on global competition. For instance, a strategy that was dominated before can become desirable from the selection perspective when behavioural plasticity is taken into account. Furthermore, it can ease conditions for a successful fixation in infinite populations' invasions. We demonstrate our findings on the examples of Prisoners' Dilemma and Snowdrift game, where we define conditions under which cooperation can be promoted. 
\end{abstract}

\keywords{incompetence, \and cooperation, \and finite populations, \and prisoners' dilemma, \and snowdrift game}


\section*{Introduction}

Evolutionary game theory has been used to study the evolution of populations and their interactions in biology since its first appearance in 1973 \cite{Smith1973}. It proved to be an elegant framework providing a lot of fruitful discoveries in biology and human behaviour. One aspect that evolutionary game theory aims to study is the influence of stochasticity on the evolution of populations. Here, we consider a game setting where individuals might make mistakes when executing their type's strategy by utilising the concept that was first referred to as \emph{incompetence of players} \cite{Beck2013,Filar2012}. We shall refer to this concept as behavioural plasticity that results in a random strategy execution. Such plasticity may influence the outcome of the local interaction affecting the entire population in the long run. 

Behavioural stochasticity is a popular object of study for game-theorists. First, the concept of ``trembling hands'' \cite{Selten1975} was suggested as an approach to players' mistakes during the strategies' execution with some small probability. Later, in evolutionary games it was modelled via mutations \cite{Stadler1992,Tarnita2009}, language learning \cite{Komarova2004,Komarova2001,Nowak2001} or other experimental learning processes \cite{FudenbergLevine,Hopkins2002,McKelvey1995,Selten1991}, adaptation dynamics \cite{Levin2003}, phenotypic plasticity \cite{Dridi2019} and edge diversity in games on graphs \cite{Su2016,Su2019}. Furthermore, the replicator-mutator dynamics \cite{Bomze1995,Nowak2001} was suggested to model mutations in the evolutionary dynamics, where each type has its own mutation rate but these mutations do not occur simultaneously. However, even though stochasticity plays an important role in evolutionary games, especially in finite populations, the possibility of mutants to imitate a resident strategy by mistake has not yet been considered.

Evolutionary games aim to uncover effects of natural selection on the populations and mechanisms that could help their survival \cite{Apaloo2009,Hofbauer2003,Apaloo1995,Nowak2006,Smith1973}. First, infinitely large completely mixed population settings were considered. Replicator dynamics \cite{Taylor1978,Zeeman1980,Hofbauer2003} has been used to predict which strategy might become stable (or evolutionary stable) \cite{Smith1982}. Individuals are chosen at random for interaction and none of them preserves memory about whom they interacted with before. Such settings help us to gain a high-level understanding of the game and strategies comparison which may lead to existence of evolutionary stable strategies (ESS). 

While a well-mixed infinitely large population is an elegant theoretical model, the more realistic model for natural setting is the finite and spatially distributed population  \cite{Durrett1994,Nowak2004a,Allen2017,Kaveh2019}. Here, stochasticity in the interactions and reproductive processes is taken into account affecting the way natural selection works. In such settings coexistence of several strategies does not appear to be a general outcome resulting in one strategy dominating another. That is, if $N$ is the size of the population, then the fixation probability $\rho$ differs from a neutral drift $\nicefrac{1}{N}$. 
Further, in coordination games, fixation of one strategy or another can also depend on the initial frequency of the mutant type. Counter-intuitively, for mutants to be successful their initial relative frequency, $x^*$, should be less than $\nicefrac{1}{3}$, which is referred to as a $\nicefrac{1}{3}$ law \cite{Ohtsuki2007,Nowak2006}. The complexity of the problem grows with the strength of the selection acting on the populations. As a consequence, analytical results are obtained for a weak or no selection. 

In this manuscript, mistakes during the invasion represent a defensive mechanism helping invaders to blend in the resident population. Once established, the invaders may re-learn their own strategy after the competition pressure is neglected. By utilising the notion of plasticity (or incompetence), first we start with the analysis of a general well-mixed finite population prone to behavioural mistakes and analyse the effect of competence on the fixation probabilities. We determine conditions for which one strategy could dominate another one. Second, we expand our result to the infinite population settings. We also allow individuals to improve their strategy execution in order to study the overall effect of incompetence on the outcome of the selection.

The notion of incompetence is related to the concept of strategy plasticity \cite{carja2017evolutionary}. A type or strategy is plastic if it can switch or transform between different subtypes. These subtypes are distinguished by their fitness values. Alternatively, instead of switching, an individuals of a given subtype can asymmetrically give birth to individuals of other types. Such type plasticity creates phenotypic variability in populations. Phenotypic plasticity can be seen as a result of stochastic gene expression without any environmental triggers \cite{raj2008nature} or as a response to environmental challenges \cite{acar2005enhancement}. Bacterial persistence and metabolic switching in microbial population \cite{kussell2005phenotypic, gallie2015bistability,cohen2013microbial} as well as differentiation-dedifferentiation in tumours \cite{kaveh2017stem} are among examples of phenotypic plasticity at a cellular level. 

Evolutionary dynamics for plastic strategies have been mostly discussed in constant selection schemes. One can, however, consider a phenotypically variable population where fitness is derived from a matrix game. Then, the game interaction is defined not only among the competing types but also the subtypes belonging to the same phenotypic group. We can consider Darwinian selection between two plastic strategies where each are composed of different subpopulations. However, the notion of incompetence is distinct from other models of strategy plasticity 
in that only during game interactions a given type can play as a variable type, that is, adopt two or more strategies. However, during reproduction events, it will always create a uniform population. In other words, in the model with incompetent strategies there are no actual distinct subtypes in the populations that coexist at the same time but rather a player can execute two or more strategies during a game-theoretic interaction. Hence, such plasticity, or incompetence, is better applied to behavioural traits of individuals rather than genes and phenotypes.

We demonstrate the influence of behavioural plasticity on 2 examples of social interactions: Prisoners' Dilemma and Snowdrift game \cite{Doebeli2005}. In social dilemmas, an interaction between cooperators and defectors is studied. Cooperation is a ubiquitous and beneficial behavioural trait that is yet prone to exploitation by free-riders. Cooperative populations are often prone to invasions by selfish individuals. However, a population consisting of only free-riders may not survive. In complex ecological systems, elaborate memory-one strategies demonstrate their ability to explain existence of cooperation between ``good'' individuals while punishing selfish defectors \cite{Nowak2006,Nowak2005,Sigmund2010selfish,Sigmund2012,Hilbe2018}. But what if a one-step strategy is not enough in particular settings? We suggest to look into the case when cooperators may take sufficient time to defeat defectors. However, we do not develop a detailed biological model but describe the game and its results in terms of the probability to survive. Our results are applicable to a wide range of scenarios in which payoffs affect survival. 

The manuscript has the following structure. We first introduce the model for a 2x2 matrix game and interpret it. Then, we characterise the model's behaviour in both finite and infinite populations' settings. We derive all possible transitions in a game and a new form of $\nicefrac{1}{3}$ law. Next, we demonstrate the effect of incompetence on the examples of Prisoners' Dilemma and Snowdrift games. In the last section, we discuss the solutions of evolutionary dynamical models of phenotypic plasticity. We derive analytic expression for the steady state solutions in weak selection and for small levels of plasticity. We show that this model does not follow any $\nicefrac{1}{3}$ law similar to that of derived for the model of incompetence.



\section*{Model}

Consider a game between two strategies $1$ and $2$. When an individual playing strategy $1$ interacts with an individual playing strategy $2$, then individual $1$ obtains a reward $b$ and individual $2$ obtains $c$. When two individuals play strategy $1$, they both obtain $a$. When two individuals play strategy $2$, they both obtain $d$. The fitness of individuals is derived from a two-player matrix game. 
However, these game outcomes are possible under the assumption that players can perfectly behave as their own type. That is, behavioural mistakes or uncertainty is not taken into account. Specifically, if interacting individuals can bias their strategy choice by their opponents type, behavioural mistakes can disturb the outcome of the game due to the unforeseen change of the opponent's strategy. An example of such disturbed interaction is depicted in Figure \ref{fig:scheme}. Here, both types with certain probabilities may execute a strategy different from their initial type, disturbing the interaction outcome. Then, the outcome depends not only on the type of the interacting individuals but also on their mistake probabilities. 

\begin{figure}[h!]
\begin{center}
\includegraphics[scale=0.3]{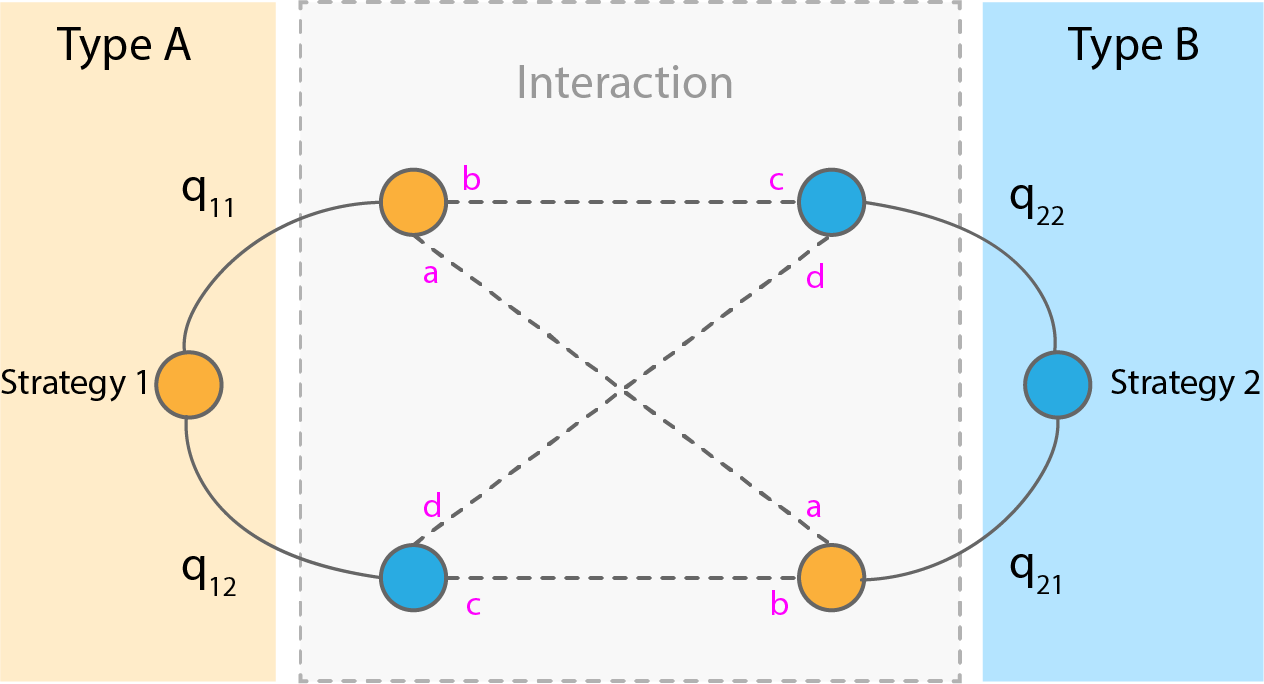}
\end{center}
\caption{A schematic representation of the interaction between type $A$ and $B$ individuals under behavioural uncertainty. Here, both individuals might switch to the strategy of their opponent or execute their own type strategy ($q_{12}=1-\gamma$ and $q_{21}=1-\eta$ versus $q_{11}=\gamma$ and $q_{22}=\eta$). Then, the interaction outcome is stochastic and can be one of the following: interaction between two type $A$ individuals, two type $B$ individuals or type $A$ and $B$ individuals. }
\label{fig:scheme}
\end{figure}

In what follows, we consider two-dimensional games under the assumption that players can play their opponent's strategy with some probabilities. This can be considered as an interaction between players utilising mixed strategies rather than pure types. Individuals are assigned probability measures capturing their mixed strategies in the stochastic incompetence matrix $Q$

\begin{equation*}
\label{Qmatrixdef}
Q=\left(\begin{array}{cc}
q_{11}& q_{12}\\
q_{21}& q_{22}
\end{array} \right).
\end{equation*}

Such settings are formally referred to as 'incompetence' \cite{Filar2012,Kleshnina2017}. The incompetence is introduced in the game by probabilities that each type, A or B, executes its strategy to play the game. A player A executes strategy 1 upon choosing strategy 1 with probability $q_{11}$. With probability $q_{12}$ it might play strategy 2. Similarly, type B individual plays strategy 2 after choosing strategy 2 with probability $q_{22}$ and with probability $q_{21}$ she plays strategy 1.
Since the probabilities in the incompetence matrix, $Q$, add up to unity in each row, we simplify our notation as $q_{11} = \gamma, q_{12}=1-\gamma, q_{21} =1-\eta, q_{22}=\eta$. Thus, the incompetence of mutants is measured with parameter $\gamma$ while $\eta$ denotes the level of incompetence in residents 

\begin{align}
Q = \bordermatrix{~ & {\rm A} & {\rm B} \cr
                  {\rm A} & \displaystyle \gamma & 1-\gamma\cr
                  {\rm B} & 1-\eta & \displaystyle \eta}.
\end{align}

Note that the original payoff matrix is given by

\begin{align}
R = \bordermatrix{~ & {\rm 1} & {\rm 2} \cr
                  {\rm 1} & \displaystyle a & b\cr
                  {\rm 2} & c & \displaystyle d}.
\end{align}

However, behavioural plasticity disturbs game dynamics since both individuals involved in the interaction might behave differently. Hence, the payoff of the strategy 1 (or 2) in a well-mixed population is defined as

\begin{align}
\pi_{1} &= a(x_{1}q_{11} +x_{2}q_{21})+b(x_{1}q_{12}+x_{2}q_{22})\nonumber\\
&= a(x_{1}\gamma +x_{2}(1-\eta))+b(x_{1}(1-\gamma)+x_{2}\eta)\nonumber\\
\pi_{2} &= c(x_{1}q_{11} +x_{2}q_{21})+d(x_{1}q_{12}+x_{2}q_{22}) \nonumber\\
&= c(x_{1}\gamma +x_{2}(1-\eta))+d(x_{1}(1-\gamma)+x_{2}\eta)
\end{align}

\noindent where $x_{1}$ and $x_{2}$ denote the frequencies of A and B types respectively. Since $x_{1} + x_{2} =1$ we use the notation $x = x_{1}$ and $x_{2} = 1-x$ through the rest of the text. Overall payoff values for incompetent types A and B are similarly given by

\begin{align}
\pi_{\rm A} &= q_{11}\pi_{1} + q_{12}\pi_{2}\nonumber\\
&= \gamma \pi_{1} + (1-\gamma)\pi_{2} \nonumber\\
\pi_{\rm B} &= q_{21}\pi_{1} + q_{22}\pi_{2} \nonumber\\
&= (1-\eta) \pi_{1} + \eta\pi_{2}
\end{align}


As a results, an effective payoff matrix can be written as 

\begin{align}
\label{IncompetenceRM}
Q R Q^T &= \bordermatrix{~ & {\rm A} & {\rm B} \cr
                  {\rm A} & \tilde{a} & \tilde{b}\cr
                  {\rm B} & \tilde{c} & \tilde{d}} 
\end{align}

\noindent where

\begin{align}
\tilde{a} &= (a q_{11} + c q_{12})q_{11} + (b q_{11} + d q_{12}) q_{12},\nonumber\\
\tilde{b} &= (a q_{11} + c q_{12})q_{21} + (b q_{11} + d q_{12}) q_{22},\nonumber\\
\tilde{c} &= (a q_{11} + b q_{12})q_{21} + (c q_{11} + d q_{12})q_{22},\nonumber\\
\tilde{d} &= (a q_{21} + b q_{22})q_{21} + (c q_{11} + d q_{12})q_{22}.
\end{align}



\section*{Behavioural plasticity in finite populations}

In the following we discuss the dynamics and how the population of competing plastic types evolve in time. We consider a finite population setting where individuals interact in a stochastic manner. Consider a population consisting of $N$ individuals of 'resident' type $B$ invaded by mutants of type $A$. The total population size is assumed to be constant. At every time step, $t$, the population of mutants, $A$, is given by $n$ individuals, where the population of type $B$ is given by $N-n$ individuals.  

Fitnesses are derived from game payoffs defined in the previous section. Fitness of type $A$ is denoted as $f_{\rm A}$ and type $B$ as $f_{\rm B}$, respectively. The fitnesses are written in terms of payoffs $\pi_{\rm A,B}$ as

\begin{align}
\label{fitnessFP}
    f_{\rm A} = 1+w\pi_{\rm A},\nonumber \\
    f_{\rm B} = 1+w\pi_{\rm B},
\end{align}

\noindent were $w$ is selection intensity. The case of $w=1$ represents strong selection when fitness is completely defined by the payoffs from interactions. For $w\ll 1$, we are in the weak selection limit. This is when the interactions make only a small contribution to the fitness function. 

For the dynamics we utilise a Moran birth-death updating rule on a complete graph \cite{Nowak2006,Nowak2004,Traulsen2008}, which represents a well-mixed population. At every time step, an individual is chosen, proportional to its fitness, to reproduce. The offspring replaces another individual in the population randomly. This model is a Markov process with two absorbing states. The first is an absorbing state of fixation, when a population is taken over by mutants and $n=N$. Similarly, in the case of an extinction absorbing state, the invading mutants cannot successfully compete during the selection process and become extinct ($n=0$). The transition probabilities for such Markov process are given by 

\begin{align}
\label{transprob}
p^{+}_{n} = {\rm Prob}(n \rightarrow +1) &= \frac{f_{\rm A}n}{f_{\rm A}n + f_{\rm B}(N-n)}\frac{N-n}{N}\nonumber\\
p^{-}_{n} = {\rm Prob}(n \rightarrow -1) &= \frac{f_{\rm A}(N-n)}{f_{\rm A}n + f_{\rm B}(N-n)}\frac{n}{N}
\end{align}

The fixation probability of a single randomly placed mutant is thus given by the formula

\begin{align}
\rho_{\rm A} = \frac{1}{\displaystyle 1 + \sum^{N-1}_{k=1}\prod^{k}_{i=1} (p^{-}_{i}/p^{+}_{i})} 
\label{fixation}
\end{align}


In the limit of weak selection we can derive the exact expression for the fixation probability of a plastic mutant $A$ as

\begin{align}
\label{fixprobeq}
    \nicefrac{1}{\rho_A} &\approx N-\frac{1}{6} N w (\gamma +\eta -1) \Big(a (N-2) (\gamma -2 \eta +2)+b (2 \gamma -4 \eta -\gamma  N+2 \eta  N+N+1) \nonumber\\ 
    &+2 \gamma  c-4 c \eta -\gamma  c N+2 c \eta  N-2 c N+c-2 \gamma  d+4 d \eta +\gamma  d N-2 d \eta  N-d N+2 d \Big).
\end{align}

The condition for the selection advantage of plastic mutants is given by

\begin{align}
\rho_A > \frac{1}{N}
\label{fixprobeq2}
\end{align}

\noindent Equation \ref{fixprobeq2} can be further reduced to a simple inequality between the payoff values

\begin{align}
\pi_{\rm A} &> \pi_{\rm B},
\end{align}

\noindent thus, 

\begin{align}
\label{finpopcond}
\nu \pi_{1} > \nu \pi_{2},
\end{align}

\noindent where $\nu:= -1+\gamma+\eta.$ We can re-write this condition further as

\begin{align}
\nu  (a-c) \Big( \gamma n +(1-\eta)(N-n) \Big) > \nu (d-b) \Big( (1-\gamma) n+\eta(N-n)\Big)
\end{align}

\noindent Hence, if $\nu > 0$ then $\pi_{1} > \pi_{2}$ is the condition for selection advantage of a plastic type A. If $\nu < 0$ scenario reverses and $\pi_{2} > \pi_{1}$ is the condition for selection advantage of a plastic type A. Hence, in finite populations, the influence of behavioural uncertainty is determined by the sign of the second eigenvalue of $Q$, which showed to be a global bifurcation in replicator dynamics as well.







One can write down a first-order differential equation for the average frequency of mutants, $x = \langle n\rangle /N$.

\begin{align} 
\dot{x} = \langle p_{+} \rangle - \langle p_{-} \rangle 
\label{averageeq}
\end{align}

\noindent where the brackets $\langle \cdot \rangle$ denote a stochastic average. In the limit of large-$N$ ($N \to \infty)$, equation \ref{averageeq} can be further simplified to,

\begin{align}
\dot{x} = \frac{ f_{A} x (1-x)}{f_{A}x+ f_{B}(1-x)} - \frac{ f_{B} (1-x)x }{f_{A}x+ f_{B}(1-x)}.
\end{align}

This is only exact in infinite populations where the averages over transition probabilities, $\langle p_{+} \rangle \text{ and } \langle p_{-} \rangle$, can be written as $p_{+}(\langle n \rangle)$ and $p_{-}(\langle n \rangle)$. Furthermore, in the limit of weak selection, the denominators in the above equation are approximated with unity and the dynamics is reduced to a replicator equation form as

\begin{align}
\label{RDderived}
\dot{x} = (f_{\rm A} - \phi)x 
\end{align} 

\noindent where $\phi = f_{\rm A}x + f_{\rm B}(1-x)$.

\subsection*{$\nicefrac{1}{3}$ law modification}

The situation becomes more complex in case of coordination games where both strategies are best responses to themselves. Then, the dominating strategy in finite populations is determined by the initial frequency of the mutants. This is often referred to as a $\nicefrac{1}{3}$ law. In a bistable interaction, i.e. such that $a>c$ and $d>b$, there exists an unstable equilibrium where the frequency of type $A$ individuals is given by

\begin{align*}
    x^* = \frac{b-d}{c-a+b-d}.
\end{align*}

Selection prefers type $A$ individuals over type $B$ individuals if the frequency satisfies the inequality $x^*<\nicefrac{1}{3}$. In other words, in a large well-mixed population in the limit of weak selection, the fixation probability of type $A$, $\rho_A$, is greater than $\nicefrac{1}{N}$ if the  frequency of type-A mutants is less than $\nicefrac{1}{3}$. We shall next check how this rule changes under our assumptions. in our settings, the new version of the unstable equilibrium, $\tilde{x}^*$, is given by

\begin{align*}
    \tilde{x}^* = \frac{\eta(b-d)-(1-\eta)(c-a)}{(\gamma - (1-\eta))(c-a+b-d)}
\end{align*}

Thus, the new refined $\nicefrac{1}{3}$ law in the plastic game has the form

\begin{align}
\label{13rule}
   \tilde{x}^* = \frac{x^*(2\eta -1)-(1-\eta)}{\gamma - (1-\eta)} < \frac{1}{3}.
\end{align}

Note that the sign of the eigenvalue of the matrix $Q$, $\nu$, changes the direction of the inequality. For $\nu>0$  the condition can be simplified to 

\begin{align}
\label{LAW}
    x^* < \frac{\gamma+2(1-\eta)}{3(2\eta-1)},
\end{align}

\noindent which is greater or equal to $\nicefrac{1}{3}$ for $\gamma+3 \geq 4\eta$. Note that for $\nu<0$ the inequality sign in \eqref{LAW} is reversed. Then, the region of stability of type A individuals can be larger in comparison to the stability region of the original game, making it easier for mutants to invade. The demonstration of the plastic $\nicefrac{1}{3}$ law for a reward matrix 

\begin{equation*}
R = \begin{blockarray}{ccc}
& A & B \\
\begin{block}{c(cc)}
\; A \; & \;10 \;& \;1 \;\\
\; B \; & \; 9 & 3 \;\\
\end{block}
\end{blockarray}\;
\end{equation*}

\begin{figure}[h!]
\begin{center}
\includegraphics[width=1.0\textwidth]{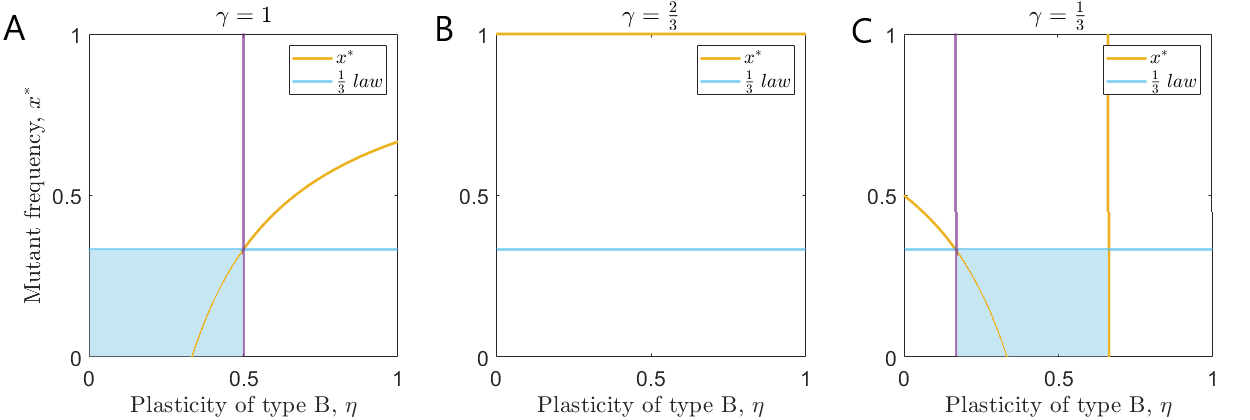}
\end{center}
\caption{Demonstration of $\nicefrac{1}{3}$ law depending on residents' behavioural plasticity. Here, the rule \eqref{13rule} is depicted as a function of $\eta$ for different levels of $\gamma$. (A) $\gamma=1$: type $A$ individuals dominate for the level of $\eta$ less than $\nicefrac{1}{2}$ (shaded area). (B) $\gamma=\nicefrac{2}{3}$: no region of stability is detected for type $A$ individuals. (C) $\gamma=\nicefrac{1}{3}$: there exists a level of behavioural plasticity for type $B$ individuals where type $A$ individuals are preferred by selection.}
\label{fig:1/3rule}
\end{figure}

\noindent can be found in Figure \ref{fig:1/3rule}. Here, $x^*=\nicefrac{2}{3}>\nicefrac{1}{3}$ and hence selection prefers type $B$ individuals over type $A$. However, the selection outcome depends on the mistakes of type $A$ and $B$ individuals. The light-blue region represents the region of stability for type $A$ individuals. As $\lambda$ decreases, the region of stability first shrinks - residents of type $B$ have to be significantly plastic for selection to prefer type $A$. However, after $\gamma=\nicefrac{2}{3}$, the region of stability emerges again and grows as $\gamma\to 0$. These regions exist only for some plasticity level whenever $\tilde{x}^*$ is feasible.

\subsection*{New risk dominance condition}

In addition to the $\nicefrac{1}{3}$ law, one can also check which of the strategies is risk-dominant. In the absence of behavioural plasticity in large populations and weak selection, the condition $\rho_A>\rho_B$ is equivalent to

\begin{equation*}
    a+b>c+d.
\end{equation*}

In a classic case, the risk-dominant strategy has a larger basin of attraction, which can also be written as $x^*<\nicefrac{1}{2}$. In the updated plastic settings we derive the condition for risk dominance of mutants $A$, which is corrected on the actual behaviour of individuals as

\begin{equation*}
    \underbrace{\Big(\gamma+ (1-\eta) \Big)}_{\text{fraction of cooperators}} (a-c) > \underbrace{\Big( (1-\gamma)+\eta \Big)}_{\text{fraction of defectors}} (d-b).
\end{equation*}

In terms of the fixed point for $\nu>0$ this condition can be reduced to

\begin{align}
\label{RDcond}
    x^* < \frac{\gamma+(1-\eta)}{2(2\eta-1)},
\end{align}

In case of $a-c=d-b$, behavioural plasticity would determine the dominant strategy and the more ``frequent'' strategy will overtake. That is, if $\gamma>\eta$ and $a-c=d-b$, then plastic type A is dominating over plastic type B. However, if $a-c\neq d-b$, then degree of plasticity can correct the condition of risk dominance. We summarise plastic concepts of $\nicefrac{1}{3}$ law and risk dominance in Figure \ref{fig:risk}.
However, for general selection and population size, risk dominance does not determine the strategy that will be preferred by selection.

\begin{figure}[h!]
\begin{center}
\includegraphics[width=1.0\textwidth]{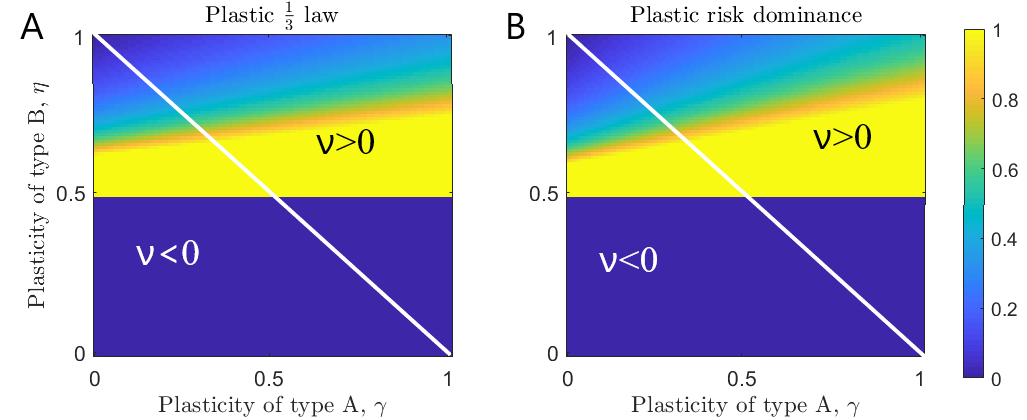}
\end{center}
\caption{Plastic $\nicefrac{1}{3}$ law (panel A) and risk dominance (panel B) conditions. Here, we plot the values of plastic conditions \ref{LAW} and \ref{RDcond} in the plane of types' plasticity. Dark blue region on both plots determines the level of plasticity where conditions are no longer positive, whereas plasticity in bright yellow region provides a chance to type A individuals to overtake type B, which is mostly determined by plasticity of residents. The exact inequality sign depends on the sign of $\nu$: if $\nu>0$, then $x^*$ must be smaller than the value of the conditions. For $\nu<0$ the situation is reversed. The brighter the yellow colour, the higher the value of the conditions for $\nicefrac{1}{3}$ law and risk dominance.}
\label{fig:risk}
\end{figure}


\section*{Behavioural plasticity in infinite populations}
\label{sec:2x2}


In this section, we shall compare our results for finite populations with infinite well-mixed populations. We can write down the replicator equation as derived in \eqref{RDderived} for the frequency of the mutants, $x$ as a function of time $t$ \cite{Taylor1978}. It can be rewritten as

\begin{align}
\dot{x} &= (f_{\rm A} - \phi)x \nonumber\\
&= (f_{\rm A} - f_{\rm B})x(1-x) \nonumber\\
&= w(\pi_{\rm A} - \pi_{\rm B})x(1-x) \nonumber\\
\end{align}

\noindent and simplified further to

\begin{align}
\dot{x} = \{(a-c)[(1-\eta) + (\gamma+\eta -1)x] +(b-d)[\eta -(\gamma+\eta-1)x]\}x(1-x).
\end{align}

\noindent We can read off a so-called $\sigma$-condition for the evolutionary advantage of the mutant type,

\begin{align}
a + \sigma b > c + \sigma d
\end{align}

\noindent where

\begin{align}
\sigma = \frac{ \eta + (\gamma + \eta -1)x^{\star}}{(1-\eta) + (\gamma+\eta-1)x^{\star}}.
\end{align}


\subsection*{Varying behavioural plasticity}

In the case of infinite populations, the time of interaction and size of population are not limited. Hence, plasticity can change over time as species adapt to their environment and specialise. To cover this possibility, we shall assume that individuals are able to improve their strategic execution towards utilising pure strategies. Hence, we let the population to explore their environment via a learning process with a parameter $\lambda$, which represents the probability of using pure strategies. That is we let them adapt from some mixed strategies matrix, $S$, to the perfect execution, an identity matrix $I$, by the law 

\begin{equation}
\label{incomp}
Q(\lambda)=(1-\lambda) S+\lambda I,\; \lambda \in[0,1].
\end{equation}

Hence, the fitness matrix depends on $\lambda$. Generally, we set matrix $S$ to have the form

\begin{equation*}
\label{Smatrix}
S=\left(\begin{array}{cc}
\alpha& 1-\alpha\\
1-\beta& \beta
\end{array} \right).
\end{equation*}

Note that $\gamma$ and $\eta$ from the previous sections can be rewritten in terms of $\alpha$ and $\beta$ as follows: $\gamma=\lambda(1-\alpha)+\alpha$ and $\eta=\lambda(1-\beta)+\beta$. 

Next, let us recall that the structure of the fitness matrix can be reduced by subtracting diagonal elements of the corresponding columns since replicator dynamics remain invariant under such positive linear transformation \cite{Hofbauer1980}. Hence, we can significantly simplify our analysis by considering a matrix with $0$ on the main diagonal, that is,

\begin{equation}
\label{fitRorig}
\tilde{R} = \begin{blockarray}{ccc}
& A & B \\
\begin{block}{c(cc)}
\; A \; & \;0 \;& \;b-d \;\\
\; B \; & \; c-a & 0 \;\\
\end{block}
\end{blockarray}\;.
\end{equation}

Let us next recall the result on all possible game flows in two-dimensional games from \cite{Zeeman1980}. Here, depending on the signs of $c-a$ and $b-d$ in \eqref{fitRorig} we may observe different qualitative behaviour of the game, specifically:
\begin{enumerate}
    \item {\bf If $c>a$ and $d>b$:} there exists no mixed equilibrium and strategy A is dominating.
    \item {\bf If $c<a$ and $b>d$:} there exists no mixed equilibrium and strategy B is dominating.
    \item {\bf If $c>a,b>d$:} there exists a stable mixed equilibrium $(x^*,1-x^*)$ and both vertices are unstable.
    \item {\bf If $a>c,d>b$:} there exists an unstable mixed equilibrium $(x^*,1-x^*)$ and both vertices are stable. The dominating strategy is then determined by the initial frequencies.
    \item {\bf If $c-a=b-d=0$:} the dynamics is trivial and any point $(x,y)$ is stable.
\end{enumerate}

\begin{figure}[h!]
\begin{center}
\includegraphics[width=\textwidth]{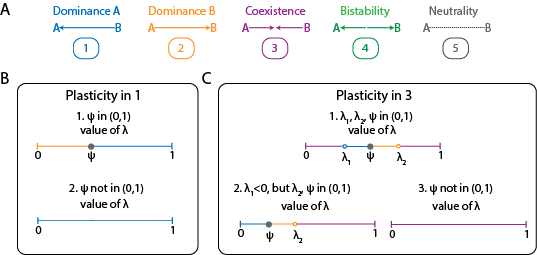}
\end{center}
\caption{Game transitions in 2x2 matrix games in infinite populations. We first demonstrate all possible game flows in two dimensions without plasticity (panel A). Then, we explore the behaviour of our model in the games while the probability of executing pure strategies, $\lambda$, varies in $[0,1]$. We start with dominance of one strategy (panel B), where only one transition is possible at $\lambda^*=\psi$ if $\psi\in[0,1]$. However, in the games with an interior equilibrium we can observe more transitions (panel C). Specifically, if both $\lambda^c_1$ and $\lambda^c_2$ are feasible, then we can observe as many as four game flows: from interior equilibria to pure equilibria to neutrality.}
\label{fig:2x2}
\end{figure}

All these cases are captured in Figure \ref{fig:2x2} panel A. Hence, we may or may not observe a mixed equilibrium $x^*$. In the view of equation \eqref{IncompetenceRM}, for the new incompetent game we obtain a reward matrix that depends on $x^*,\alpha,\beta$ and $\lambda$ and has the following canonical form

\begin{equation}
\label{Rmatrix}
\tilde{R}(\lambda) = (c-a+b-d) \; \nu \times \begin{blockarray}{(cc)}
0 & \beta-(1-x^*)+\lambda(1-\beta)\\
\\
\alpha-x^*+\lambda (1-\alpha) & 0\\
\end{blockarray}\;,
\end{equation}

\noindent where in this case $\nu= \alpha+\beta-1+\lambda(2-\alpha-\beta).$ 
Then, if behaviour of the game is determined by the signs of the elements of the matrix $\tilde{R}(\lambda)$, we shall next analyse all possible transitions that the game passes as plasticity of individuals changes.

Let us first introduce some notation. We can simplify matrix $\tilde{R}(\lambda)$ by introducing two measures combining both fitness and behavioural plasticity effects in the case if $x^*$ is in $(0,1)$ defined as

\begin{align}
    \tilde{a} := \frac{x^*-\alpha}{1-\alpha},\;\;\; \tilde{b} := \frac{1-x^*-\beta}{1-\beta}.
\end{align}

\noindent Note that these values exist only if the game possesses an interior equilibrium $x^*$. In what follows, we compare types $A$ and $B$ based on their advantages, which help us predict which strategy is dominating. 


We say that if $\tilde{a}>\tilde{b}$, then strategy $A$ has a higher advantage, and vice versa. However, there is one more important parameter that has a global influence on the game dynamics: the second eigenvalue of $Q(\lambda)$, $\nu$. We define a new value, $\lambda^*=\psi$ such that $\nu(\psi)=0$, determined as

\begin{equation}
    \psi = \frac{1-\alpha-\beta}{2-\alpha-\beta}.
\end{equation}

\noindent Note that $\psi$ can be written as

\begin{equation}
    \psi = \frac{x^*-\alpha+1-x^*-\beta}{(1-\alpha)+(1-\beta)},
\end{equation}

\noindent which relates to the advantages of both strategies. This value is a mediant of $\tilde{a}$ and $\tilde{b}$, that is,

$$\min(\tilde{a},\tilde{b})<\psi<\max(\tilde{a},\tilde{b}).$$

However, for these values to be able to disrupt the game behaviour we require that $\psi\in[0,1]$, which is possible if and only if $\det(S)<0$. Then, knowing that the critical value of $\lambda$ exists in the interval $[0,1]$, we can show that, in fact, $\lambda^c=\psi$ is a bifurcation point of the game that reflects the stability of the equlibria (see Appendix for more technical details). Hence, value $\psi$ is a global bifurcation of the incompetent game dynamics for any initial game flow.

Further, note that stability of the interior equilibrium depends on the behaviour of the mean fitness given by

\begin{equation*}
    \phi = \frac{(c-a)(b-d)}{c-a+b-d}.
\end{equation*}

\noindent In our new plastic game the mean fitness of the mixed equilibrium 
can be re-scaled to

\begin{equation*}
 \tilde{\phi}(\lambda):= 
 -\tilde{a}\tilde{b} - \tilde{b}\lambda - \tilde{a}\lambda + \lambda^2.
 \end{equation*}
 
The re-scaled mean fitness $\tilde{\phi}(\lambda)$ changes its sign at $\lambda^c=\tilde{b}$ and $\lambda^c=\tilde{a}$. That is, the strategic advantages are, in fact, the bifurcation points of the game (see Appendix for more technical details). Note that these effects are observed only in the games with existing interior equilibrium as these advantages exist only if $x^*,1-x^*\in(0,1)$.

Then, if the original game possesses a stable interior equilibrium and $\tilde{a}>\tilde{b}$, then transitions in a plastic game follow the rules determined below:
\begin{enumerate}
    \item For $\lambda\in[0,\tilde{a})$ there exists a stable mixed equilibrium;
    \item For $\lambda\in[\tilde{a},\psi)$ strategy A dominates strategy B;
    \item At $\lambda=\psi$ the game is neutral;
    \item For $\lambda\in(\psi,\tilde{b}]$ strategy B dominates strategy A;
    \item For $\lambda\in[\tilde{b},1]$ there exists a stable mixed equilibrium again.
\end{enumerate}

If $\tilde{a}<\tilde{b}$, then strategy B dominates A. In the case of an unstable equilibrium, the transitions are equivalent but for $\tilde{a}<\tilde{b}$ first strategy B dominates strategy A. Hence, the advantage determines which strategy will be dominating for small values of the  parameter $\lambda$. 
All transitions for 2x2 matrix games are summarised in Figure \ref{fig:2x2} panels B and C.

\section*{Prisoners' Dilemma}

To demonstrate our model, let us consider an example of Prisoners' Dilemma \cite{Tucker1950,Rapoport1965,Axelrod1981}. Individuals interacting with cooperators obtain benefit $b$ while cooperators pay cost $c$. Then, the payoffs to cooperators and defectors are defined by the following matrix

\begin{align}
R = \bordermatrix{~ & {\rm C} & {\rm D} \cr
                  {\rm C} & \displaystyle b-c & -c\cr
                  {\rm D} & b & \displaystyle 0}.
\end{align}

We introduce the notion of plasticity in the game and consider a new payoff matrix

\begin{align}
\tilde{R} = \left(
\begin{array}{cc}
 (b-c) \gamma  & b(1- \eta) - c \gamma  \\
 b \gamma -c (1-\eta) & (b-c) (1-\eta) \\
\end{array}
\right),
\end{align}
where $\gamma=\alpha(1-\lambda)+\lambda$ and $\eta=\beta(1-\lambda)+\lambda$. The payoffs for cooperators and defectors respectively are given by

\begin{align}
\pi_{\rm C} 
&= b x (\gamma - (1-\eta))+b - b \eta - c \gamma\\
\pi_{\rm D} 
&= b x (\gamma - (1-\eta))+b - b \eta - c (1 -\eta).
\end{align}

The replicator equation in this case has the following form

\begin{align}
\dot{x} &= w(\pi_{\rm C} - \pi_{\rm D})x(1-x) \nonumber\\
\end{align}

\noindent that simplifies to

\begin{align}
\dot{x} = - c (1-x) x (\gamma - (1-\eta)).
\end{align}

Hence, the frequency of cooperators will steadily decline whenever $\gamma > (1-\eta)$. However, if the plasticity level of both populations is such that $\gamma < (1-\eta)$, then cooperators will benefit from plasticity obtaining positive rate of change, which supports our results from the previous section (see Figure \ref{fig:2x2} panel B).

Given \eqref{finpopcond}, we obtain that similar rule is applicable to the finite population invasions. Hence, in the donation game the level of plasticity can help cooperators if defectors are also incompetent. In this case, selection will prefer cooperators over defectors if $\gamma < (1-\eta)$. 
On the other hand, if defectors use their pure strategy, i.e. $\eta=1$, then cooperators are better off by coping defectors' strategy fully and obtaining the highest possible probability of fixation $\rho_C = \nicefrac{1}{N}$.



\begin{figure}[h!]
\begin{center}
\includegraphics[width=1.0\textwidth]{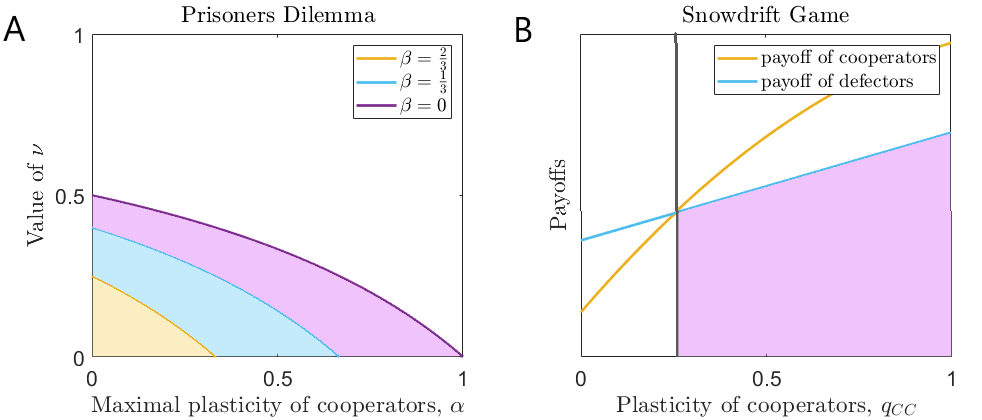}
\end{center}
\caption{Critical levels of $\lambda$ for cooperators survival as a function of defectors' behavioural plasticity in finite populations. (A) The level of critical plasticity required for cooperators survival as a function of defectors' initial plasticity $\beta$ in Prisoners' Dilemma. Here, $\lambda^c=\psi$ is depicted as a function of $\alpha$ and $\beta$. (B) Critical level of cooperators plasticity $q_{CC}$ required for cooperators survival in the Snowdrift game (shaded area). Here, cooperators' and defectors' payoffs are depicted as a function of $\gamma$.}
\label{fig:nu}
\end{figure}

In the case of classic Prisoners' Dilemma in infinite and well-mixed populations, defectors dominate cooperators. Hence, we are in the case of panels B or C in Figure 1. However, once behaviour of players is not certain, for the values of $\lambda<\psi$, the cooperators will denominate defectors. In these settings, dominance of cooperators can be considered as imitation of defectors. With probability $q_{CC}(\lambda)$ cooperators will still cooperate executing their own strategy. 

The value of $\psi$ depends on the matrix $S$ that captures initial degree of behavioural plasticity of individuals. Here, $\alpha$ determines how often type $A$ individuals will play the strategy of their type and $\beta$ - for type $B$ individuals. For example, if matrix $S$ is given by
\begin{equation}
\label{Sexample}
    S = \begin{blockarray}{(cc)}
    \;0.3 & 0.7\;\\
    \\
    \;0.6 & 0.4\;\\
    \end{blockarray}\;,
\end{equation}
which has a negative determinant, then there exists a critical value of $\lambda^c=\psi=\nicefrac{3}{13}$. Hence, the level to which population has to decrease their plasticity ($\lambda^c$) to switch the stability of the equilibria is determined by $\alpha$ and $\beta$. Different values of $\psi$ are depicted in Figure \ref{fig:nu} panel A.



\section*{Snowdrift game}

Next, we consider an example of a Snowdrift game, where cooperation persists at some frequency. This game is also referred to as a Hawk-Dove game or a Chicken game \cite{Smith1973,Smith1982,Nowak2006,Doebeli2005}. Here, the cooperators share their expenses $c$ if both cooperate. Also, even if a cooperator interacts with a defector, the cooperator still obtains benefit $b$. The payoffs to cooperators and defectors are defined by the matrix

\begin{align}
R = \bordermatrix{~ & {\rm C} & {\rm D} \cr
                  {\rm C} & \displaystyle b-\nicefrac{c}{2} & b-c\cr
                  {\rm D} & b & \displaystyle 0}.
\end{align}

Then, the new plastic payoff matrix is given by

\begin{align}
\tilde{R} = \left(
\begin{array}{cc}
 \frac{1}{2} (2 b-c) (2-\gamma) \gamma  & (\gamma -1) \eta  b+b-\frac{1}{2} c \gamma  (\eta +1) \\
 -(1-\gamma) \eta  b+b-\frac{1}{2} c (2-\gamma) (1-\eta) & \frac{1}{2} (2 b-c) \left(1-\eta ^2\right) \\
\end{array}
\right).
\end{align}

Cooperators and defectors in this game obtain the following payoffs

\begin{align*}
\pi_{\rm C}& = b \Big( (\gamma -1) \eta -(\gamma -1) x (\gamma +\eta -1)+1 \Big)+\frac{1}{2} c \gamma  \Big( -1-\eta +x (\gamma +\eta -1) \Big)\\
\pi_{\rm D}& = b (\gamma -1) \eta  x+b \eta ^2 (x-1)+b-\frac{1}{2} c (\eta -1) (-1 -\eta +x (\gamma +\eta -1))
\end{align*}


\noindent and the mixed equilibrium is given by

\begin{equation}
    \tilde{x}^*=\frac{2 b \eta -c (\eta +1)}{(2 b-c) (\gamma - (\eta))}.
\end{equation}

Depending on the mixed strategies of cooperators and defectors, we may observe up to three transitions in the game while players are learning. Let us demonstrate this on an example. For the Snowdrift game, we set a fitness matrix to be

\begin{equation*}
    R = \begin{blockarray}{(cc)}
    \;2 & 1\;\\
    \\
    \;3 & 0\;\\
\end{blockarray}\;,
\end{equation*}

\noindent resulting in a stable mixed equilibrium $x^*=1-x^*=\nicefrac{1}{2}$. We set the starting level of incompetence to $S$ from \eqref{Sexample}. Hence, we are in the case of panel C in Figure 2. The critical values of the incompetence parameter are given by

\begin{equation*}
    \tilde{a} = \nicefrac{2}{7},\quad \psi=\nicefrac{3}{13},\quad\tilde{b}=\nicefrac{1}{6}.
\end{equation*}

The strategic advantage in this case is such that $\tilde{a}<\tilde{b}$. Hence, we obtain five regimes while the probability to play a pure strategy, $\lambda$, changes from 0 to 1. In the first regime, for $\lambda\in[0,\nicefrac{1}{6})$, there exists a stable mixed equilibrium $x^*(\lambda)$. As $\lambda$ grows and reaches the next interval $[\nicefrac{1}{6},\nicefrac{3}{13})$, cooperators dominate and outcompete defectors. The point $\lambda=\nicefrac{3}{13}$ is a transition point where both strategies are neutral, followed by the next critical value $\lambda=\nicefrac{2}{7}$. It switches stability of the equilibria and cooperators get out-competed by defectors. Then, again, for $\lambda\in(\nicefrac{2}{7},1]$, there exists a stable mixed equilibrium $x^*(\lambda)$. The mixed equilibrium here depends on $\lambda$ and is exactly equal to $\tilde{x}^*(\lambda)=1-\tilde{x}^*(\lambda)=\nicefrac{1}{2}$ only for $\lambda=1$.

However, in finite populations the dominating strategy is determined by the condition \eqref{finpopcond}. Hence, in Snowdrift games there will be a critical value of incompetence leading to the switching between defectors and cooperators dominating. We captured this effect in Figure \ref{fig:nu} panel B.


\section*{Comparison with other forms of plasticity}

In evolutionary theory, various forms of strategy plasticity have been discussed with the one most popular being phenotypic plasticity. In general, each strategy is composed of several subtypes that can divide or transform into each other. For simplicity we assume two competing strategies, say strategy A and strategy B, being able to transform or divide into each other with given rates or probabilities. We consider a model where types can divide (a)symmetrically into same offsprings or different ones. For example, type A can divide, with probability $q_{\rm AA}$, into two type A daughter offsprings. With probability $q_{\rm AB}$, it divides to an A daughter and a B daughter cells. Similarly, a type B divides into two type B cells with probability $q_{\rm BB}$ and into asymmetrically into one A and one B with $q_{\rm BA}$. The overall rate that a type $i (i= A,B)$ divides into two daughters $i$ and $j$ types $(i,j = \{A,B\})$, is given by $f_{i} \times q_{ij}$. This is schematically depicted in Figure \ref{fig:schemeSI}.   

\begin{figure}[h!]
\begin{center}
\includegraphics[scale=0.5]{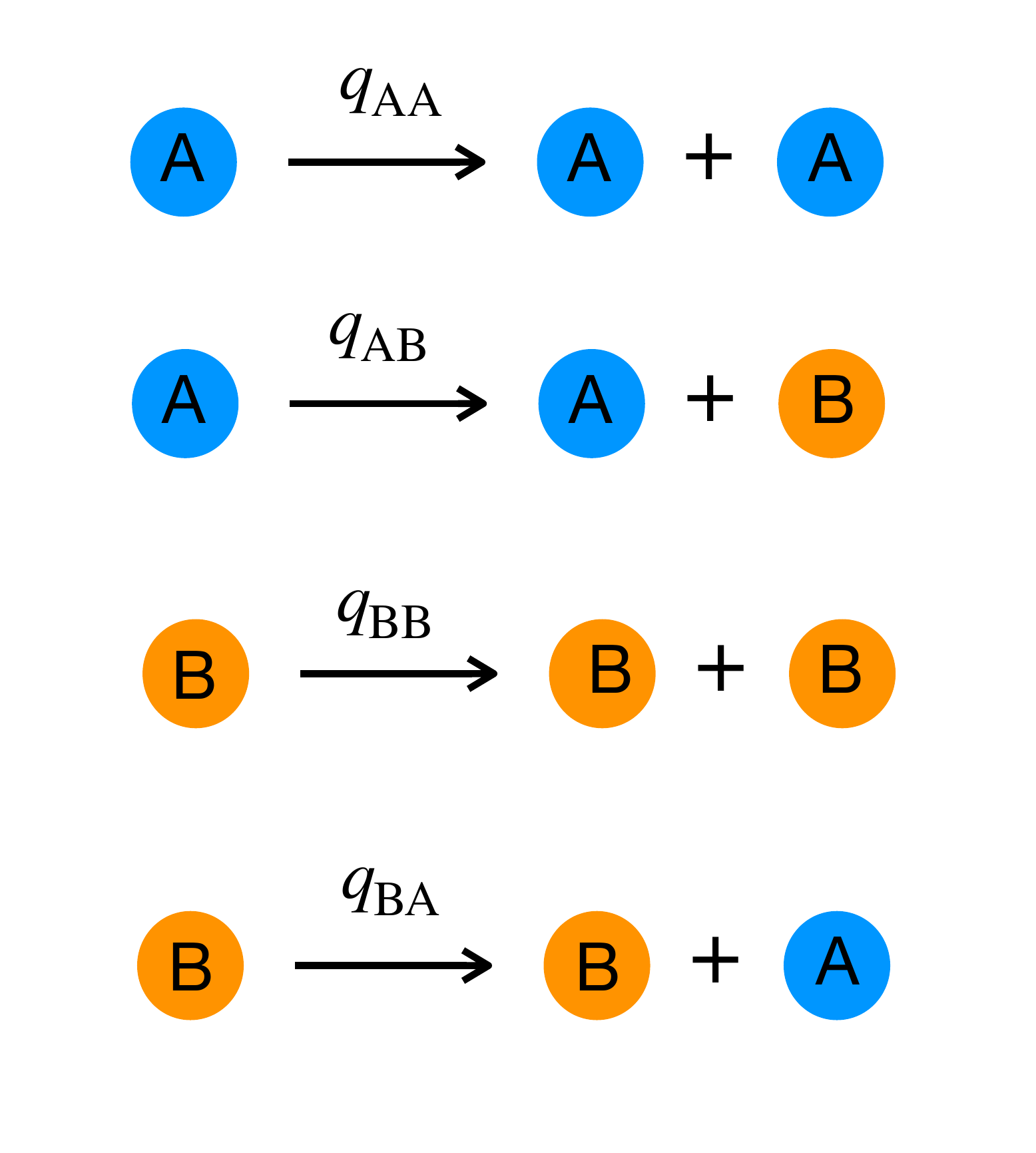}
\end{center}
\caption{ Schematic representation of two plastic types A and B that can divide into each other. Upon a division event, a type A divides with probability $q_{\rm AA}$ into two A daughter cells. With probability $q_{\rm AB}$ it divides asymmetrically into one A and one B daughter. Similar division scheme with probabilities $\eta$ and $q_{\rm BA}$ is true for type B.}
\label{fig:schemeSI}
\end{figure}

As before, we assume that the fitnesses are derived from a matrix game interaction. Recall that fitness of A (B)  is denoted by $f_{\rm A} (f_{\rm B})$ and given in equation \eqref{fitnessFP}. However, the payoffs are now defined differently. Denoting the frequency of type A by $x_{\rm A}$ and type B by $x_{\rm B}$, we have derived new payoffs as


\begin{align}
\pi_{\rm A} &= a \cdot x_{\rm A} + b \cdot x_{\rm B} \nonumber\\
\pi_{\rm B} &= c \cdot x_{\rm A} + d \cdot x_{\rm B}
\end{align}

Then, for finite but large populations the transition probabilities for a plastic model would differ from the one given in equation \eqref{transprob} and be given by 

\begin{align}
p^{+} = {\rm Prob}(n \rightarrow n+1) &= \frac{\big(f_{\rm A}x_{\rm A}\cdot \gamma + f_{\rm B}x_{\rm B}\cdot q_{\rm AB}\big)x_{\rm B}}{f_{\rm A}\cdot x_{\rm A} + f_{\rm B}\cdot x_{\rm B}}\nonumber\\
p^{-} = {\rm Prob}(n \rightarrow n-1) &= \frac{\big(f_{\rm B}x_{\rm B}\cdot \eta + f_{\rm A}x_{\rm A}\cdot q_{\rm AB}\big)x_{\rm A}}{f_{\rm A}\cdot x_{\rm A} + f_{\rm B}\cdot x_{\rm B}} 
\label{transition}
\end{align}

Re-writing the replicator equation using for the rate of change of frequencies $x_{\rm A}$ and $x_{\rm B}$ (recall that $x_{\rm A} + x_{\rm B} = 1$), we obtain

\begin{align}
\dot{x}_{\rm A} &= \langle p^{+} \rangle - \langle p^{-} \rangle \nonumber\\
& =\frac{ f_{\rm A}x_{\rm A} \cdot \gamma + f_{\rm B}x_{\rm B} \cdot q_{\rm BA} - x_{\rm A}\big( f_{\rm A}x_{\rm A} + f_{\rm B}x_{\rm B}\big)}{f_{\rm A} x_{\rm A} + f_{\rm B}x_{\rm B}} 
\end{align}

\noindent where $\langle \cdot \rangle$ denotes the average over stochastic variable. We assume large populations so the average over powers of $x_{\rm A,B}$ is replaced with their average to that power. We could derive the same replicator equation up to the numerator in infinite populations using standard methods as 

\begin{align}
\dot{x} &= f_{\rm A}x \cdot \gamma + f_{\rm B}(1-x)\cdot q_{\rm BA} - \phi x \nonumber\\
&= f_{\rm A}x (\gamma - x) + f_{\rm B}(1-x)(q_{\rm BA} - x) -\phi x
\label{replicatorSI}
\end{align} 

\noindent where $x = x_{\rm A}$ and $1-x = x_{\rm B}$ and $\phi$ is the mean fitness defined as before $\phi = f_{\rm A}x + f_{\rm B}(1-x)$. This would corresponds to a weak selection limit for the Moran process where the $f_{\rm A}x_{\rm A} + f_{\rm B}x_{\rm B} \approx 1$. We refer the reader to Appendix for more technical details of the model and its analysis.

We show that nothing similar to generalised version of \nicefrac{1}{3} law can be observed in this model of phenotypic plasticity indicating the difference from our model of incompetence. This is partly due to the fact different that subtypes produce offsprings of other subtypes and the phenotypic plasticity does manifest itself during the game interaction/encounter. Similar conditions can be derived for the ESS condition in this case but for the sake of brevity and avoiding deviations from the notion of incompetence we stop at this level of analysis here.



\section*{Conclusions}

We considered 2-dimensional evolutionary games where players may execute unintended actions when interacting with their opponents, which we call behavioural plasticity or incompetence. We derived all possible game transitions in such games in completely mixed infinite populations' dynamics. We showed that behavioural mistakes of individuals may change the outcome of the game and disturb stability of equilibria. This can happen through a process, where individuals are allowed to improve their probability to execute their pure strategy, reducing behavioural mistakes resulting in unintended actions. For example, if the original game possesses only pure strategy equilibrium, then stability regimes may switch leading to the opponents' strategy dominating. However, in the case of mixed equilibrium games, behavioural plasticity can lead to either one or another strategy dominating or a mixed equilibrium existing. The actual outcome is then determined by the exact degree of mistakes and the mixed strategies adopted by individuals.

However, in more natural settings of finite populations' dynamics, the fitness values are derived from game interactions. We assumed that type $A$ individuals invade a resident type $B$ population to answer the question of what should the degree of behavioural plasticity the invading strategy be to maximise their  fixation probability. We show that such an assumption can help to promote cooperation. Though plasticity itself plays an important role in evolutionary modelling, the possibility of cooperators to imitate by mistake a defecting strategy has not yet been considered. Here, mistakes during the invasion may represent a defensive mechanism helping invaders to blend in the resident population. Once established, the invaders re-learn their own strategy becoming more cooperative after the competition pressure is neglected. Specifically, plasticity may ease the $\nicefrac{1}{3}$ law for invaders by stretching this interval. 

Next, we consider examples of Prisoner's Dilemma and a Snowdrift game. For Prisoners' Dilemma we obtain conditions that provide cooperators a selective advantage. Those conditions depend on levels of behavioural plasticity of both cooperators and defectors. Interestingly, if we assume that both defectors and cooperators are prone to behavioural mistakes, then defectors can successfully invade if their probability to execute a pure cooperating strategy is greater than the probability of defectors acting as cooperators.
Further, we apply a similar analysis to the Snowdrift game and derive conditions on when cooperators can successfully invade defectors. For finite populations, there exists a critical mass of behavioural uncertainty of both cooperators and defectors that would allow defectors resist invasion. However, once this critical level is passed, cooperation can be sustained.

In the concluding section, we compared the notion of incompetence (or behavioural plasticity) with phenotypic plasticity. We derived a plastic model where individuals can give birth to individuals of other types. We show that such a model does not imply the existence of $\nicefrac{1}{3}$ law as in the case with incompetence. 

\section*{Funding}

This work was supported by the European Union's Horizon 2020 research and innovation program under the Marie Sklodowska-Curie Grant Agreement \#754411.

{
\bibliography{MyRefs} 
\bibliographystyle{ieeetr}
}

\section*{Appendix. Derivation of the results in the main text}

\subsection*{Infinite populations}

\begin{proposition}
\label{res:det}
The critical value $\lambda^*=\psi$ is in the interval $[0,1]$ if and only if $\det(S)<0.$ Moreover, it is a singular point of the matrix of incompetence $Q(\lambda)$.
\end{proposition}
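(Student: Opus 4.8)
The plan is to reduce both claims to an elementary computation with the determinant and eigenvalues of $S$. First I would compute $\det(S) = \alpha\beta - (1-\alpha)(1-\beta) = \alpha + \beta - 1$, so that the hypothesis $\det(S) < 0$ is simply the inequality $\alpha + \beta < 1$. Writing $s := \alpha + \beta$, we have $\psi = \frac{1-s}{2-s}$, and — excluding the degenerate case $\alpha = \beta = 1$, in which no mixing occurs and $Q(\lambda) \equiv I$ — the denominator $2-s$ is strictly positive, so the sign analysis of $\psi$ is clean.

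Second, I would establish $\psi \in [0,1] \iff \det(S) < 0$ from this expression. Since $2-s>0$, the inequality $\psi \le 1$ is equivalent to $1-s \le 2-s$, i.e.\ to $1 \le 2$, hence always holds (in fact $\psi$ is decreasing in $s$ on $[0,2)$ with maximum $\tfrac12$ at $s=0$, so $\psi < 1$ throughout). The inequality $\psi \ge 0$ is equivalent to $1-s \ge 0$, i.e.\ $s \le 1$, with equality $\psi = 0$ exactly at $s=1$; so $\psi$ lands in the relevant (nontrivial) interval precisely when $\det(S) < 0$. I would spell out the boundary case $\det(S) = 0 \Leftrightarrow \psi = 0$ so the reader sees why the strict sign appears in the statement.

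Third, for the singular-point claim I would use that $S$ is row-stochastic: $(1,1)^{\top}$ is a right eigenvector of $S$ with eigenvalue $1$, and since $\operatorname{tr}(S) = \alpha+\beta$ the remaining eigenvalue is $\alpha+\beta-1 = \det(S)$. Because $Q(\lambda) = (1-\lambda)S + \lambda I$ is an affine combination with the identity, it shares the eigenvectors of $S$ and has eigenvalues $1$ and $\nu(\lambda) = (1-\lambda)(\alpha+\beta-1) + \lambda$; hence $\det Q(\lambda) = \nu(\lambda)$ identically. Substituting $\lambda = \psi$ and using $\nu(\psi) = (\alpha+\beta-1) + \frac{1-\alpha-\beta}{2-\alpha-\beta}(2-\alpha-\beta) = 0$ gives $\det Q(\psi) = 0$, i.e.\ $Q(\psi)$ is singular.

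There is no genuine obstacle here; the only care needed is bookkeeping at the endpoints — excluding $\alpha=\beta=1$ so the denominator of $\psi$ does not vanish, and being explicit that $\det(S)=0$ yields $\psi=0$ — and stating the eigenvalue computation as an honest equality $\det Q(\lambda) = \nu(\lambda)$ rather than a sign coincidence, since $\nu$ is precisely the quantity that governs the bifurcation elsewhere in the section.
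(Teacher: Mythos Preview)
Your proposal is correct and follows essentially the same route as the paper: compute $\det(S)=\alpha+\beta-1$, rewrite $\psi=-\det(S)/(2-\alpha-\beta)$, and identify the eigenvalues of $Q(\lambda)$ as $1$ and $\nu(\lambda)$ so that $\det Q(\lambda)=\nu(\lambda)$ vanishes at $\lambda=\psi$. Your treatment is in fact more careful than the paper's own proof --- you explicitly verify the upper bound $\psi\le 1$, exclude the degenerate case $\alpha=\beta=1$, and flag the boundary $\det(S)=0\Leftrightarrow\psi=0$, none of which the paper spells out.
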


\begin{proof}

Let us first consider the determinant of the starting level of incompetence:

$$\det(S) = \alpha\beta - (1-\beta)(1-\alpha) = -1+\alpha+\beta.$$

\noindent Let us now consider $\psi$:

$$\psi = \frac{1-\alpha-\beta}{2-\alpha-\beta} = \frac{-\det(S)}{2-\alpha-\beta}.$$
    
Next, let us consider the matrix of incompetence, $Q$. The eigenvalues of this matrix are $1$ and $\nu=-1+\alpha+\beta+\lambda(2-\alpha-\beta)$. Hence, the determinant of $Q$ changes its sign from negative to positive at $\lambda^* = \psi$.

\end{proof}

\begin{proposition}
\label{res:signR}
Let, for $\epsilon>0:\min(\tilde{a},\tilde{b})<\psi\pm\epsilon<\max(\tilde{a},\tilde{b})$, the values of the incompetence parameter be defined as $\lambda^-:=\psi-\epsilon$ and $\lambda^+:=\psi+\epsilon$. Then, for such values, the sign of the elements in the fitness matrix $\tilde{R}(\lambda)$ are reversed, that is, $\text{sign}(\tilde{r}_{ij}(\lambda^-))=-\text{sign}(\tilde{r}_{ij}(\lambda^+))$.
\end{proposition}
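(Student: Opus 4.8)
The plan is to reduce the statement to an elementary sign analysis by factoring each entry of $\tilde R(\lambda)$ explicitly as a polynomial in $\lambda$ with an $\lambda$-independent, sign-definite prefactor. First I would note that, by \eqref{Rmatrix}, the diagonal entries of $\tilde R(\lambda)$ vanish identically, so $\sign(\tilde r_{11})=\sign(\tilde r_{22})=0$ and the claim is trivial there; it remains to treat the two off-diagonal entries. Using $\tilde a=(x^*-\alpha)/(1-\alpha)$ and $\tilde b=(1-x^*-\beta)/(1-\beta)$ I would rewrite the bracketed matrix in \eqref{Rmatrix} via $\beta-(1-x^*)+\lambda(1-\beta)=(1-\beta)(\lambda-\tilde b)$ and $\alpha-x^*+\lambda(1-\alpha)=(1-\alpha)(\lambda-\tilde a)$, and write the eigenvalue factor as $\nu=(2-\alpha-\beta)(\lambda-\psi)$ — the same factorization underlying Proposition \ref{res:det}. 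This yields
\begin{align*}
\tilde r_{12}(\lambda) &= (c-a+b-d)(2-\alpha-\beta)(1-\beta)\,(\lambda-\psi)(\lambda-\tilde b),\\
\tilde r_{21}(\lambda) &= (c-a+b-d)(2-\alpha-\beta)(1-\alpha)\,(\lambda-\psi)(\lambda-\tilde a).
\end{align*}

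Next I would observe that under the standing assumptions of this section — $\alpha,\beta\in[0,1)$ (so $1-\alpha,1-\beta>0$ and $2-\alpha-\beta>0$; note $\alpha=1$ or $\beta=1$ would make $\tilde a$ or $\tilde b$ undefined) and a nondegenerate game possessing an interior equilibrium, so $c-a+b-d\neq0$ — the prefactor $(c-a+b-d)(2-\alpha-\beta)(1-\beta)$, and likewise the one with $1-\alpha$, is a nonzero constant independent of $\lambda$. Hence for $\lambda$ near $\psi$ we have $\sign\big(\tilde r_{12}(\lambda)\big)=\text{(fixed sign)}\cdot\sign\big((\lambda-\psi)(\lambda-\tilde b)\big)$ and $\sign\big(\tilde r_{21}(\lambda)\big)=\text{(fixed sign)}\cdot\sign\big((\lambda-\psi)(\lambda-\tilde a)\big)$.

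Finally I would invoke the mediant inequality $\min(\tilde a,\tilde b)<\psi<\max(\tilde a,\tilde b)$, already established in the text, together with the hypothesis that $\epsilon>0$ is small enough that $\lambda^\pm=\psi\pm\epsilon$ still lie strictly between $\min(\tilde a,\tilde b)$ and $\max(\tilde a,\tilde b)$. This forces $\lambda^-$ and $\lambda^+$ onto the \emph{same} side of $\tilde a$ and onto the same side of $\tilde b$, so $\sign(\lambda^--\tilde a)=\sign(\lambda^+-\tilde a)$ and $\sign(\lambda^--\tilde b)=\sign(\lambda^+-\tilde b)$, whereas crossing $\psi$ flips the remaining factor: $\sign(\lambda^--\psi)=-1=-\sign(\lambda^+-\psi)$. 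Multiplying these out entry by entry gives $\sign(\tilde r_{ij}(\lambda^-))=-\sign(\tilde r_{ij}(\lambda^+))$ for both off-diagonal $(i,j)$, and with the trivial diagonal case this finishes the proof. The computation is routine; the only point needing care is to justify that the factors $(\lambda-\tilde a)$ and $(\lambda-\tilde b)$ do not change sign between $\lambda^-$ and $\lambda^+$ — which is exactly why the hypothesis confines $\lambda^\pm$ to the open interval strictly between $\tilde a$ and $\tilde b$, and this is the one place the mediant property of $\psi$ is essential.
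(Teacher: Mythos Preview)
Your proof is correct and rests on the same core observation as the paper's: the factor $\nu$ changes sign at $\psi$, while the bracketed off-diagonal entries of \eqref{Rmatrix} do not change sign as long as $\lambda$ stays strictly between $\tilde a$ and $\tilde b$. The execution differs slightly. The paper evaluates $\hat r_{12}$ and $\hat r_{21}$ at $\psi$, finds them equal to $\pm r$ with $\sign(r)=\sign(\tilde a-\tilde b)$, and then perturbs by $\pm\epsilon$ to argue that these signs persist, while $\nu(\psi\pm\epsilon)$ flips. You instead factor each entry completely as a constant times $(\lambda-\psi)(\lambda-\tilde b)$ or $(\lambda-\psi)(\lambda-\tilde a)$, which makes the sign analysis immediate and transparently identifies why the hypothesis $\min(\tilde a,\tilde b)<\psi\pm\epsilon<\max(\tilde a,\tilde b)$ is exactly what is needed. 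Your route is a bit cleaner and makes the role of the mediant property more explicit; the paper's perturbation argument is more hands-on but establishes the same facts.
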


\begin{proof}

Let us rewrite the fitness matrix from \eqref{Rmatrix} in the following manner:

\begin{equation*}
\label{Rmatrix}
\tilde{R}(\lambda) = (c-a+b-d) \; \nu \times \begin{blockarray}{(cc)}
0 & \hat{r}_{12}\\
\\
\hat{r}_{21} & 0\\
\end{blockarray}\;.
\end{equation*}

    First, let us consider elements of $\tilde{R}(\lambda)$ at $\lambda^*=\psi$:
    
\begin{align*}
    \hat{r}_{12}(\psi) &= \beta-x_{\rm R}+\psi(1-\beta) = \frac{(1-\alpha)(\beta-x_{\rm R})-(1-\beta)(\alpha-x_{\rm M})}{2-\alpha-\beta}:=r,\\
    \hat{r}_{21}(\psi) &= \alpha-x_{\rm M}+\psi(1-\alpha) = \frac{(1-\beta)(\alpha-x_{\rm M})-(1-\alpha)(\beta-x_{\rm R})}{2-\alpha-\beta}=-r.
\end{align*}

Hence, both elements have opposite signs and those are determined by the sign of $\delta=\tilde{a}-\tilde{b}$. Specifically, $r>0  \iff \delta>0$. Note that $\hat{r}_{12}(\psi \pm \epsilon) = r \pm \epsilon(1-\beta)$ and $\hat{r}_{21}(\psi \pm \epsilon) = - r \pm \epsilon(1-\alpha)$. Hence, for $\epsilon$ such that $\min(\tilde{a},\tilde{b})<\psi\pm\epsilon<\max(\tilde{a},\tilde{b})$, the signs of $\hat{r}_{12}(\psi \pm \epsilon)$ and $\hat{r}_{21}(\psi \pm \epsilon)$ are preserved. Then, we obtain

\begin{equation*}
\label{Rmatrix}
\tilde{R}(\psi \pm \epsilon) = (c-a+b-d) \; \nu(\psi\pm\epsilon) \times \begin{blockarray}{(cc)}
0 & \hat{r}_{12}(\psi\pm\epsilon)\\
\\
\hat{r}_{21}(\psi\pm\epsilon) & 0\\
\end{blockarray}\;,
\end{equation*}

Further, the global sign of $\tilde{R}(\lambda)$ is determined by the initial signs of $a$ and $b$ and it does not depend on $\lambda$. In addition, note that $\nu(\psi-\epsilon)<0$ and $\nu(\psi+\epsilon)>0$. Hence, the transition in the signs of the elements in the fitness $\tilde{R}(\lambda)$ matrix happen at $\lambda^*=\psi$.
    
\end{proof}

\begin{proposition}
\label{res:bif}
Let $\lambda^c_1=\tilde{a}$ and $\lambda^c_2=\tilde{b}$ be bifurcation points of the dynamics for $\tilde{R}(\lambda)$ such that $\lambda^c_1,\lambda^c_2\in [0,1]$. Then, transitions of the equlibria are as follows:
\begin{enumerate}
    \item[(i)] If there exists a stable interior equilibrium, then at $\min(\lambda^c_1,\lambda^c_2)$ the game transits from a stable interior equilibrium to a stable pure strategy. If $\delta<0$, then the pure stable strategy is strategy 1, otherwise, strategy 2.
    \item[(ii)] If there exists a stable interior equilibrium, then at $\max(\lambda^c_1,\lambda^c_2)$ the game transits from a stable pure strategy equilibrium to a stable interior equilibrium. If $\delta>0$, then the pure strategy equilibrium is strategy 1, otherwise, strategy 2.
    \item[(iii)] If there exists an unstable interior equilibrium, then at $\min(\lambda^c_1,\lambda^c_2)$ the game transits from an unstable interior equilibrium to a stable pure strategy. If $\delta<0$, then the pure stable strategy is strategy 2, otherwise, strategy 1.
    \item[(iv)] If there exists a stable interior equilibrium, then at $\max(\lambda^c_1,\lambda^c_2)$ the game transits from a stable pure strategy equilibrium to an unstable interior equilibrium. If $\delta>0$, then the pure strategy equilibrium is strategy 2, otherwise, strategy 1.
\end{enumerate}
\end{proposition}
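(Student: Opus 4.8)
The plan is to reduce the whole statement to the signs of the two off-diagonal entries of $\tilde R(\lambda)$ from \eqref{Rmatrix} as $\lambda$ runs over $[0,1]$, and then to apply the classification of Zeeman's five cases (recalled before Figure~\ref{fig:2x2}) on each subinterval where those signs are constant. The first step is algebraic: one checks directly that $\beta-(1-x^*)+\lambda(1-\beta)=(1-\beta)(\lambda-\tilde b)$, that $\alpha-x^*+\lambda(1-\alpha)=(1-\alpha)(\lambda-\tilde a)$, and that $\nu(\lambda)=(2-\alpha-\beta)(\lambda-\psi)$, so that
\begin{align*}
\tilde R(\lambda)=K\,(2-\alpha-\beta)(\lambda-\psi)\begin{pmatrix}0 & (1-\beta)(\lambda-\tilde b)\\ (1-\alpha)(\lambda-\tilde a) & 0\end{pmatrix},\qquad K:=c-a+b-d.
\end{align*}
Since $1-\alpha$, $1-\beta$, $2-\alpha-\beta$ are positive constants and $K$ has a fixed sign, the sign of the $(1,2)$ entry is $\operatorname{sign}(K)\,\operatorname{sign}(\lambda-\psi)\,\operatorname{sign}(\lambda-\tilde b)$ and that of the $(2,1)$ entry is $\operatorname{sign}(K)\,\operatorname{sign}(\lambda-\psi)\,\operatorname{sign}(\lambda-\tilde a)$.

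Next I would pin down the constants and the ordering of the three breakpoints $\tilde a$, $\psi$, $\tilde b$. Evaluating at $\lambda=1$, where $Q(1)=I$, $\nu(1)=1$ and $\tilde R(1)$ is the original reduced matrix \eqref{fitRorig}, identifies the $(1,2)$ and $(2,1)$ entries of $\tilde R(1)$ with $b-d$ and $c-a$; hence a stable interior equilibrium (Zeeman case~3) forces $K>0$ and an unstable one (case~4) forces $K<0$. Since an interior equilibrium exists, $x^*\in(0,1)$, so $\tilde a,\tilde b<1$; together with the standing hypothesis $\tilde a,\tilde b\in[0,1]$ and the mediant inequality $\min(\tilde a,\tilde b)<\psi<\max(\tilde a,\tilde b)$ (established in the text; cf.\ Proposition~\ref{res:det}, giving $\psi\in[0,1]$ iff $\det S<0$), the three breakpoints lie in $[0,1]$ in the order $\min(\tilde a,\tilde b)<\psi<\max(\tilde a,\tilde b)$.

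The core is then a finite case analysis. Partition $[0,1]$ at the three breakpoints into four consecutive subintervals and, on each, read off the sign pair of the off-diagonal entries from the formulas above, hence the Zeeman case, hence the phase portrait. On the two outer subintervals both entries keep the sign they have at $\lambda=1$, so the portrait is that of the original game --- an interior equilibrium, stable iff $K>0$ --- and the crossings at $\min(\lambda^c_1,\lambda^c_2)=\min(\tilde a,\tilde b)$ and $\max(\lambda^c_1,\lambda^c_2)=\max(\tilde a,\tilde b)$ are precisely the two ``interior $\leftrightarrow$ pure'' transitions. On the two middle subintervals exactly one of $\lambda-\tilde a$, $\lambda-\tilde b$ has flipped sign relative to the adjacent outer subinterval while $\nu$ has respectively not yet / already changed sign at $\psi$; hence exactly one off-diagonal entry flips, landing the game in Zeeman case~1 or case~2 (a dominance regime) whose attracting pure vertex is determined by which entry flipped --- i.e.\ by $\operatorname{sign}(\delta)$, $\delta=\tilde a-\tilde b$ --- and by $\operatorname{sign}(K)$. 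Reading the transition at $\min$ off the resulting $2\times2$ table (indexed by $\operatorname{sign}K$ and $\operatorname{sign}\delta$) gives (i) and (iii), the transition at $\max$ gives (ii) and (iv); that the point $\psi$ between the two crossings is where the game is neutral and the equilibria swap stability is already Propositions~\ref{res:det} and~\ref{res:signR}.

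The main obstacle is not any single computation but the bookkeeping: keeping straight (a) $\operatorname{sign}(K)$ versus stable/unstable interior equilibrium, (b) $\operatorname{sign}(\delta)$ versus which off-diagonal entry flips at $\min$ and which at $\max$, and (c) the translation from a sign pair of off-diagonal entries to the identity of the attracting pure vertex under the labeling used in the recalled statement of Zeeman's theorem. One should also record that the degenerate configurations $\min(\tilde a,\tilde b)=0$, $\tilde a=\tilde b$ (which collapses $\psi$ onto them and deletes the dominance regimes), or $K=0$ (excluded, since then no interior equilibrium exists) merely merge adjacent regimes and never contradict (i)--(iv). Once the table is written out, each of the four assertions is one of its rows.
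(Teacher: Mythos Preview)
Your proposal is correct and follows essentially the same approach as the paper's proof: both reduce the question to the signs of the two off--diagonal entries of $\tilde R(\lambda)$ and then read off the phase portrait from Zeeman's classification on each subinterval. The only difference is presentational: you factor $\tilde R(\lambda)$ globally as $K(2-\alpha-\beta)(\lambda-\psi)\begin{pmatrix}0 & (1-\beta)(\lambda-\tilde b)\\ (1-\alpha)(\lambda-\tilde a) & 0\end{pmatrix}$ and read signs on whole subintervals, whereas the paper evaluates the entries at each bifurcation value $\lambda=\tilde a,\tilde b$ and tracks their sign change under a local $\pm\epsilon$ perturbation; the mathematical content is identical.
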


\begin{proof}
    
(i) For a case of a stable interior equilibrium we have $A,B>0$. Without loss of generality, let us assume that $\min(\tilde{a},\tilde{b})=\tilde{a}$, i.e. $\delta<0$. Let us first consider a bifurcation point $\lambda^c_1=\tilde{a}$. The eigenvalue of matrix $Q(\lambda^c_1), \nu(\lambda^c_1)$, is negative since $\tilde{a}<\psi$. The fitness matrix then has the following form

\begin{equation*}
\label{Rmatrix}
    \tilde{R}(\lambda^c_1) = (c-a+b-d) \; \nu(\lambda^c_1) \times \begin{blockarray}{(cc)}
    0 & \bar{r}_{12}(\lambda^c_1)\\
    \\
    \bar{r}_{21}(\lambda^c_1) & 0\\
\end{blockarray}\;,
\end{equation*}

\noindent where $\bar{r}_{12}(\lambda^c_1) = \delta(1-\beta)$ and $\bar{r}_{21}(\lambda^c_1)=0$. Since $\tilde{a}<\tilde{b}$, we have $\bar{r}_{12}(\lambda^c_1)<0$. Next, consider

\begin{align*}
\bar{r}_{21}(\lambda^c_1-\epsilon)=-\epsilon(1-\alpha),\\
\bar{r}_{21}(\lambda^c_1+\epsilon)=\epsilon(1-\alpha).
\end{align*}

Hence, if $\min(\tilde{a},\tilde{b})=\tilde{a}$, then the system bifurcates from a stable interior equilibrium to a stable pure strategy 1.
    
(ii) Let us now assume that $\delta>0$, i.e. $\max(\tilde{a},\tilde{b})=\tilde{b}$. Since $\tilde{b}>\psi$, we obtain a positive eigenvalue $\nu(\lambda^c_2)$. The fitness matrix then is defined as

\begin{equation*}
\label{Rmatrix}
    \tilde{R}(\lambda^c_2) = (c-a+b-d) \; \nu(\lambda^c_2) \times \begin{blockarray}{(cc)}
    0 & \bar{r}_{12}(\lambda^c_2)\\
    \\
    \bar{r}_{21}(\lambda^c_2) & 0\\
\end{blockarray}\;,
\end{equation*}

\noindent where $\bar{r}_{12}(\lambda^c_2)=0$ and $\bar{r}_{12}(\lambda^c_1) = -\delta(1-\alpha)$, which implies $\bar{r}_{12}>0$ and strategy 2 dominates for $\lambda^c_2-\epsilon$. Hence, a system bifurcates from a stable pure strategy 2 to an interior equilibrium.
    
(iii)-(iv) These two parts follow the same argument with the difference that $A,B<0$. Hence, the sign of the matrix $\tilde{R}(\lambda)$ is reversed.
    
\end{proof}

\begin{theorem}
\label{res:thm1}
If $x^*\in(0,1)$ and all the advantages are such that $\psi,\tilde{a},\tilde{b}\in(0,1)$ and $\delta<0$ ($\delta>0$), then the bifurcations happen in the following order: at $\lambda^c_1=\tilde{a}$, $\lambda^c_2=\psi$ and then $\lambda^c_3=\tilde{b}$ ($\lambda^c_1=\tilde{b}$, $\lambda^c_2=\psi$ and then $\lambda^c_3=\tilde{a}$, respectively).
\end{theorem}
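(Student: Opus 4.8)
The plan is to obtain the stated order as an immediate consequence of the mediant identity for $\psi$ recorded just above the theorem, together with Propositions~\ref{res:det}, \ref{res:signR} and \ref{res:bif}, which have already certified that $\tilde a$, $\psi$ and $\tilde b$ are bifurcation points of the $\lambda$-family $\tilde R(\lambda)$. I would first reduce to the case $\delta<0$. Relabelling strategies $1$ and $2$ in the canonical matrix $\tilde R(\lambda)$ replaces $(x^*,\alpha,\beta)$ by $(1-x^*,\beta,\alpha)$; under this substitution $\tilde a$ and $\tilde b$ are interchanged, $\psi$ is unchanged, and $\delta$ flips sign, while the replicator flow — hence the set of $\lambda$ at which the qualitative picture changes — is merely relabelled. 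So the $\delta>0$ statement follows from the $\delta<0$ statement applied to the relabelled game.

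Assume then $\delta<0$, i.e. $\tilde a<\tilde b$. Since $S$ is a stochastic matrix and the advantages $\tilde a,\tilde b$ are defined, we have $\alpha,\beta\in[0,1)$, so both denominators $1-\alpha$ and $1-\beta$ are strictly positive. The identity
$$\psi=\frac{(x^*-\alpha)+(1-x^*-\beta)}{(1-\alpha)+(1-\beta)}$$
exhibits $\psi$ as the mediant of the distinct fractions $\tilde a=\tfrac{x^*-\alpha}{1-\alpha}$ and $\tilde b=\tfrac{1-x^*-\beta}{1-\beta}$, and the elementary strict mediant inequality for fractions with positive denominators gives $\tilde a<\psi<\tilde b$. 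Together with the hypothesis $\tilde a,\psi,\tilde b\in(0,1)$ this shows that the three values are pairwise distinct points of the open interval, listed in increasing order.

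It remains to confirm that these are exactly the bifurcations and therefore occur in that order as $\lambda$ runs from $0$ to $1$. The qualitative flow of a $2\times2$ game in zero-diagonal form is governed solely by the signs of its two off-diagonal entries, which here equal $(c-a+b-d)\,\nu(\lambda)$ times the affine functions $\alpha-x^*+\lambda(1-\alpha)$ and $\beta-(1-x^*)+\lambda(1-\beta)$; the first vanishes only at $\lambda=\tilde a$, the second only at $\lambda=\tilde b$, and the common prefactor changes sign only at $\lambda=\psi$ by Proposition~\ref{res:det} (since $\nu(\lambda)$ is affine with slope $2-\alpha-\beta>0$). Hence no transition can occur at any $\lambda\notin\{\tilde a,\psi,\tilde b\}$, and at each of these points a genuine transition does occur: at $\tilde a$ and $\tilde b$ the interior/pure transitions are precisely those tabulated in Proposition~\ref{res:bif}, and at $\psi$ the matrix $\tilde R(\psi)$ is identically zero (a neutral game) with both off-diagonal signs reversing across it by Proposition~\ref{res:signR}. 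Reading these off along the chain $\tilde a<\psi<\tilde b$ gives $\lambda^c_1=\tilde a$, $\lambda^c_2=\psi$, $\lambda^c_3=\tilde b$; the relabelling of the first paragraph then produces the $\delta>0$ order $\lambda^c_1=\tilde b$, $\lambda^c_2=\psi$, $\lambda^c_3=\tilde a$. The only point that needs care is the strictness of the mediant inequality, which is exactly what guarantees $\psi\neq\tilde a,\tilde b$ and hence that the middle bifurcation is genuinely separate from the outer two; the exclusion $\tilde a,\tilde b\in(0,1)$ in the hypothesis removes the degenerate sub-cases in which the chain would be truncated at an endpoint, so no further case analysis is needed.
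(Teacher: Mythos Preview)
Your proof is correct and takes essentially the same route as the paper, which simply states that the result ``follows immediately from Propositions~\ref{res:det}--\ref{res:bif}.'' You have merely unpacked what that one-line proof leaves implicit: the mediant inequality $\min(\tilde a,\tilde b)<\psi<\max(\tilde a,\tilde b)$ (stated in the main text) fixes the ordering, and Propositions~\ref{res:det}--\ref{res:bif} certify that each of $\tilde a,\psi,\tilde b$ is a genuine bifurcation, with your affine sign-tracking confirming there are no others. The symmetry reduction $(x^*,\alpha,\beta)\mapsto(1-x^*,\beta,\alpha)$ to handle the $\delta>0$ case in one stroke is a tidy addition the paper does not make explicit, but it is a cosmetic rather than substantive difference.
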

    
\begin{proof}
The statement follows immediately from Propositions 1-3.
\end{proof}

\begin{theorem}
\label{res:thm2}
If an interior equilibrium does not exist in the original game (either $a<0,b>0$ or $a>0,b<0$) and $\det(S)<0$, then the only bifurcation value $\lambda^c=\psi$ will switch the stability of the pure equilibria.
\end{theorem}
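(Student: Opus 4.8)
\medskip
\noindent\textbf{Proof strategy for Theorem~\ref{res:thm2}.}\quad
The plan is to read everything off the factorised form of the plastic payoff matrix and to show that in the dominance case two of the three candidate bifurcation values $\psi$, $\tilde a$, $\tilde b$ supplied by Propositions~\ref{res:det} and~\ref{res:bif} are infeasible, leaving $\psi$ as the only one. By \eqref{Rmatrix} I would write $\tilde R(\lambda)=(c-a+b-d)\,\nu(\lambda)\,M(\lambda)$, where $\nu(\lambda)=\alpha+\beta-1+\lambda(2-\alpha-\beta)$ and $M(\lambda)$ is the off-diagonal matrix with entries $\hat r_{12}(\lambda)=\beta-(1-x^*)+\lambda(1-\beta)$ and $\hat r_{21}(\lambda)=\alpha-x^*+\lambda(1-\alpha)$, with $x^*=\tfrac{b-d}{c-a+b-d}$ the (formal) interior coordinate. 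Since the scalar $c-a+b-d$ has a fixed sign and is nonzero (if it vanished the original game would be neutral, contradicting strict dominance), the qualitative type of $\tilde R(\lambda)$ can change only at a zero of $\nu$, of $\hat r_{12}$, or of $\hat r_{21}$, i.e.\ at one of $\psi$, $\tilde b=\tfrac{1-x^*-\beta}{1-\beta}$, $\tilde a=\tfrac{x^*-\alpha}{1-\alpha}$.

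First I would eliminate $\tilde a$ and $\tilde b$. The hypothesis says the off-diagonal entries $b-d$ and $c-a$ of the reduced matrix \eqref{fitRorig} have opposite signs, i.e.\ the original game is case~1 or case~2 of the classification, equivalently $x^*\notin(0,1)$. The assumption $\det(S)<0$ is the same as $\alpha+\beta<1$, so $\alpha,\beta\in[0,1)$ and $2-\alpha-\beta>1>0$. A short sign check then gives: if $x^*<0$ then $\tilde a<0$ and $\tilde b>1$, and if $x^*>1$ then $\tilde a>1$ and $\tilde b<0$; in either case $\tilde a,\tilde b\notin[0,1]$. Hence on $[0,1]$ the affine functions $\hat r_{12},\hat r_{21}$ never vanish and keep constant sign, so the only bifurcation value in $[0,1]$ is $\lambda^c=\psi$, which by Proposition~\ref{res:det} indeed lies in $[0,1]$ precisely because $\det(S)<0$.

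Next I would show that passing through $\psi$ exchanges the two vertices. Evaluating at $\lambda=1$ gives $\hat r_{12}(1)=x^*$ and $\hat r_{21}(1)=1-x^*$, so $\hat r_{12}(1)\hat r_{21}(1)=x^*(1-x^*)<0$; since these functions keep constant sign on $[0,1]$, $\hat r_{12}(\lambda)$ and $\hat r_{21}(\lambda)$ have opposite signs for every $\lambda\in[0,1]$. Therefore, for every $\lambda$ with $\nu(\lambda)\ne0$ the two off-diagonal entries $(c-a+b-d)\nu(\lambda)\hat r_{12}(\lambda)$ and $(c-a+b-d)\nu(\lambda)\hat r_{21}(\lambda)$ of $\tilde R(\lambda)$ also have opposite signs, so by the classification (cases~1--2) the plastic game has no interior equilibrium and exactly one pure strategy dominates. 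Because $2-\alpha-\beta>0$, the function $\nu$ is strictly increasing with $\nu(\psi)=0$, hence $\nu<0$ on $[0,\psi)$ and $\nu>0$ on $(\psi,1]$; crossing $\psi$ flips the sign of both off-diagonal entries while leaving $c-a+b-d$ and the signs of $\hat r_{12},\hat r_{21}$ untouched, so the sign pair of $\tilde R(\lambda)$ goes from $(-,+)$ to $(+,-)$ or conversely, i.e.\ the dominant pure strategy (the stable vertex) switches to the other one. At $\lambda=\psi$, $\nu=0$ and $\tilde R(\psi)=0$, so the dynamics is momentarily neutral, exactly as in Proposition~\ref{res:signR}.

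I expect the only genuine difficulty to be sign-bookkeeping rather than anything conceptual: verifying that $\tilde a,\tilde b$ really leave $[0,1]$ in every subcase of $x^*\notin(0,1)$, confirming that the representation \eqref{Rmatrix} in terms of $x^*$ remains a valid algebraic rewriting of $Q(\lambda)RQ(\lambda)^{T}$ even when $x^*\notin(0,1)$, and discarding the degenerate cases $c-a+b-d=0$ and $S=I$. Modulo these checks the statement is essentially a corollary of Propositions~\ref{res:det} and~\ref{res:signR}.
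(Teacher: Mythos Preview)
Your proposal is correct and follows essentially the same route as the paper: factor $\tilde R(\lambda)$ as in \eqref{Rmatrix}, observe that the only possible bifurcation values are $\psi,\tilde a,\tilde b$, rule out $\tilde a,\tilde b$ from $[0,1]$ because $x^*\notin(0,1)$, and invoke the sign flip of $\nu$ at $\psi$ (Proposition~\ref{res:signR}) to swap the dominant vertex. The paper's own proof is the terse two-line version of this, simply citing Propositions~\ref{res:det} and~\ref{res:signR} together with the observation that $x^*\notin(0,1)$; your write-up just makes explicit the sign checks (e.g.\ $x^*<0\Rightarrow\tilde a<0,\ \tilde b>1$) and the persistence of the dominance structure that the paper leaves to those earlier results.
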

    
\begin{proof}
If $\det(S)<0$ then $\psi\in[0,1]$ by Proposition 1. Further, if either $a<0,b>0$ or $a>0,b<0$ then $x^*,1-x^*$ do not exist in $(0,1)$ by Lemma 1. Hence, the statement follows from Proposition 2.
\end{proof}

\subsection*{Phenotypic plasticity}

We could derive the same replicator equation up to the numerator in infinite populations using standard methods as well. (This would corresponds to a weak selection limit for the Moran process where the $f_{\rm A}x_{\rm A} + f_{\rm B}x_{\rm B} \approx 1$.) 

\begin{align}
\dot{x} &= f_{\rm A}x \cdot q_{\rm AA} + f_{\rm B}(1-x)\cdot q_{\rm BA} - \phi x \nonumber\\
&= f_{\rm A}x (q_{\rm AA} - x) + f_{\rm B}(1-x)(q_{\rm BA} - x) -\phi x
\label{replicatorSI}
\end{align} 

\noindent where $x = x_{\rm A}$ and $1-x = x_{\rm B}$ and $\phi$ is the mean fitness defined as, $\phi = f_{\rm A}x + f_{\rm B}(1-x)$.

We can define parameters $\alpha$ and $\beta$ similar to the way we defined matrix $S$ in the main text:

\begin{equation*}
\label{qmatrix}
Q=\left(\begin{array}{cc}
\alpha& 1-\alpha\\
1-\beta& \beta
\end{array} \right).
\end{equation*}

Thus $q_{\rm AA} = \alpha, q_{\rm BA} = 1- \alpha, q_{\rm AB} = 1-\beta, \eta = \beta$. The replicator equation in the presence of plasticity can be further simplified to,

\begin{align}
\dot{x} &= (f_{\rm A} - f_{\rm B}) x(1-x) + (f_{\rm B}q_{\rm BA} + f_{\rm A})(q_{\rm AA}-1) x \nonumber\\
&= (f_{\rm A} - f_{\rm B}) x(1-x) + f_{\rm B} (1-\beta) - f_{\rm A} \alpha
\end{align}

Even in the absence of the game interactions the fixed point is changed from boundary points $x^{\star} =0,1$. For $w=0$ it is $x^{\star} = q_{\rm BA}/(1-q_{\rm AA}+q_{\rm BA}) = (1-\beta)/(2- \beta - \alpha)$. For weak selection we can find the fixed point by expanding, $x^{\star}$ in first order of $w$: $x^{\star} = x^{\star}_{0} + w \cdot x^{\star}_{1}$. Where $x^{\star}_{0} = (1-\beta)/(2-\beta - \alpha)$. 

\begin{align}
x^{\star}_{1} = \frac{((a-c)(1-\beta) + (b-d)(1-\alpha))(1-\alpha)(1-\beta)}{(2-\alpha-\beta)^{3}}
\end{align}

The above solutions represent the shift in the boundary fixed points (extinction and fixation) as we introduce plasticity. To find the changes in the interior equilibrium fixed point, equation \ref{13rule}, we expand the solutions for small $1-\alpha$ and $1-\beta$ around the interior equilibrium non-plastic solution, $x^{\star}_{0} = x^{\star}(\alpha=1,\beta=1) =  (d-b)/(a-b-c+d)$. We write the solutions as $x^{\star} = x^{\star}_{0} + \epsilon * x^{\star}_{1}$ where $\epsilon$ is defined as $1- \alpha = \epsilon (1-\hat{\alpha}), 1-\beta = \epsilon (1- \hat{\beta})$. The results are,

\begin{align}
x^{\star} = \frac{d-b}{a-b-c+d} + \frac{(a d w - b c w + a - b - c + d)(a (1-\beta) + (1-\alpha)b - (1-\alpha)d - (1-\beta)c)}{(a - b - c + d)w(b - d)(a - c)}
 \end{align} 

The second term are the correction to the interior fixed point solutions for evolutionary games that follows the 1/3-rules, equation \ref{13rule}.

\end{document}